\algrenewcommand\algorithmicrequire{\textbf{Input:}}
\algrenewcommand\algorithmicensure{\textbf{Output:}}
\date{}
\newcolumntype{M}[1]{>{\centering\arraybackslash}m{#1}}
\newcolumntype{N}{@{}m{0pt}@{}}
\title{\Huge
{List Decoding and New Bicycle Code Constructions for Quantum LDPC Codes}}
\author{\IEEEauthorblockN{Sheida Rabeti and Hessam Mahdavifar} 
\IEEEauthorblockA{Department of Electrical and Computer Engineering, Northeastern University, Boston, MA 02115, USA \\ 
Email: \{rabeti.s, h.mahdavifar\}@northeastern.edu}

}
\newtheorem{theorem}{{Theorem}}
\newtheorem{lemma}[theorem]{{Lemma}}
\newtheorem{proposition}[theorem]{{Proposition}}
\newtheorem{remark}{{Remark}}
\newcommand{\cC}{{\cal C}}
\newcommand{\cG}{{\cal G}}
\newcommand{\cL}{{\cal L}}
\newcommand{\cO}{{\cal O}} 
\newcommand{\cP}{{\cal P}}
\newcommand{\cR}{{\cal R}}
\newcommand{\cS}{{\cal S}} 
\newcommand{\cT}{{\cal T}}
\DeclareMathAlphabet{\mathbfsl}{OT1}{ppl}{b}{it} 
\newcommand{\be}[1]{\begin{equation}\label{#1}}
\newcommand{\ee}{\end{equation}}
\renewcommand{\le}{\leqslant} 
\renewcommand{\leq}{\leqslant}
\renewcommand{\ge}{\geqslant}
\newcommand{\Cref}[1]{Co\-ro\-lla\-ry\,\ref{#1}}
\begin{document}

\vspace{10mm}
\maketitle

\begin{abstract}
In this paper, we propose a new decoder, called the Multiple-Bases Belief-Propagation List Decoder (MBBP-LD), for Quantum Low-Density Parity-Check (QLDPC) codes. It extends the Multiple-Bases Belief-Propagation (MBBP) framework, originally developed for classical cyclic LDPC codes. The proposed method preserves the linear-time complexity of standard BP decoder while improving the logical error rate. To further reduce the logical error rate, a new decision rule is introduced for the post-processing list decoder, outperforming the conventional least-metric selector (LMS) criterion.  For the recently developed and implemented bivariate bicycle (BB) code with parameters \([[144,12,12]]\), our proposed MBBP-LD decoder achieves up to 40\% lower logical error rate compared to the state-of-the-art decoder for short QLDPC codes, i.e., BP with ordered-statistics decoding (BP-OSD), while retaining the linear-time complexity of the plain BP decoder. In addition, we explore a new subclass of BB codes, that we refer to as the univariate bicycle (UB) codes, specifically with lower-weight parity checks (\(w=6,8\)). This reduces the polynomial search space for the code compared to general BB codes, i.e., by reducing the search space over two polynomial components in BB codes to just a single polynomial component in UB codes. Simulations demonstrate the promising performance of these codes under various types of BP decoders.
\end{abstract}

\section{Introduction}
Quantum error-correcting codes have emerged as one of the key enablers of quantum systems, providing the foundation for fault-tolerant quantum computation. In this context, classical coding-theoretic tools have proven to be highly effective. In particular, low-density parity-check (LDPC) codes, first introduced by Gallager \cite{gallager1962low}, have received significant attention for quantum computing in recent years. Their sparse parity-check matrices limit the number of required qubit–qubit interactions during error correction, making them especially suitable for fault-tolerant quantum architectures. Recent developments in quantum LDPC (QLDPC) constructions offer promising pathways toward high-performance quantum computing \cite{7336474,Breuckmann_2021,vasic2025quantumlowdensityparitycheckcodes}.

The state-of-the-art benchmark decoder for most QLDPC codes is BP-OSD, which combines belief propagation (BP) with a post-processing stage based on the order-statistics decoder (OSD) \cite{Panteleev_2021}. However, the complexity of BP-OSD is not favorable as it is dominated by the OSD stage, which in the worst case scales as $\cO(n^3)$, where $n$ is the code length. This limits their feasibility for practical systems as $n$ grows large, and motivates the search for alternative decoders.
To preserve the linear-time complexity of the plain BP decoder, several refinements such as layered decoding, serial scheduling, and bit-flipping strategies have been studied \cite{crest2023layereddecodingquantumldpc,Raveendran2021trappingsetsof,chytas2025enhancedminsumdecodingquantum}; however, their impact can be limited in the presence of dense short cycles and small trapping sets that often dominate QLDPC codes. Besides OSD, several other post-processing algorithms are studied for QLDPC codes such as stabilizer inactivation (BP-SI) \cite{crest2023stabilizerinactivationmessagepassingdecoding},and BP with guided decimation (BPGD) \cite{yao2024beliefpropagationdecodingquantum} with complexities of $\cO(n^2\log(n))$ and $\cO(n^2)$. They enhance the performance, although with the cost of super-linear time complexity.

To address these challenges, we propose and explore a new decoder, called the Multiple-Bases Belief-Propagation List-decoding (MBBP-LD) decoder, which introduces multiple redundant representations of the parity-check matrix and runs BP decoding on each in parallel. Each representation induces a distinct decoding trajectory, which helps to disrupt trapping sets and reduce the effect of short cycles when the redundancy is chosen carefully. The addition of redundant checks does not alter the code. Instead, the decoder is supplied with additional dual codewords, providing extra information that accelerates convergence. The resulting list of candidates is then combined through the decision-making function to produce the final output. 
This approach builds upon the Multiple-Bases Belief-Propagation (MBBP) framework that has been studied for classical LDPC codes \cite{Hehn_2010,4557244}, which we extend in multiple important respects to adapt it to QLDPC codes. This includes extending it by allowing identical rows in the redundant checks, employing explicit list decoding, and introducing structured partitions of the parity-check rows. In particular, we construct redundant parity checks from subtrees of the Tanner graph, yielding well-distributed and connected layerings that are demonstrated to improve convergence. With parallelization, the overall latency remains linear, as in the plain BP decoder, while even offering a potentially lower latency due to faster convergence. In addition, we propose new weight 6-8 generalized bicycle (GB) codes called Univariate Bicycle (UB) and study them under various BP decoders.

The rest of this paper is organized as follows. In Section II, some preliminaries on QLDPC codes are provided. In Section III, we present the proposed decoder framework, detailing its redundant parity-check layering and decision-making strategies. In Section IV, new UB code constructions are described. Further simulation results and performance comparisons are provided in Section V, followed by concluding remarks and future research directions in Section VI.
\section{Preliminaries}
\subsection{Quantum Stabilizer and CSS Codes}
An $[[n, k, d]]$ quantum stabilizer code is a $2^k$-dimensional subspace $\mathcal{C} \in (\mathbb{C}_2)^{\otimes n}$ with an Abelian stabilizer group 
$\mathcal{S}$:
\begin{equation} \label{eq:StabilizerCode} \mathcal{C} = \{|\psi\rangle \in (\mathbb{C}_2)^{\otimes n} \colon s |\psi\rangle= |\psi\rangle, \forall s \in \mathcal{S} \}. \end{equation}
Each generator $g \in \mathcal{S}$ acts as a parity-check constraint. The \textit{minimum distance} $d$ of a stabilizer code is the minimum weight of some Pauli operator $P \in \cP_n$ commuting with elements in $\cS$ such that $P \notin \cS$.
A Calderbank--Shor--Steane (CSS) code is a stabilizer code with a parity-check matrix of the form
\[
H = \begin{bmatrix} H_X & 0 \\ 0 & H_Z \end{bmatrix},
\]
where $H_X,H_Z$ are classical binary parity-check matrices satisfying $H_X H_Z^T = 0$. Such codes can correct Pauli-$X$ and Pauli-$Z$ errors independently using $H_Z$ and $H_X$, respectively. In this work, we focus on QLDPC CSS codes, where both $H_X$ and $H_Z$ are sparse.


\subsection{Belief Propagation (BP) Decoding}
BP decoding operates on the Tanner graph of $\cG = (V_c \cup V_v, E)$, where $V_c$ are check nodes, $V_v$ are variable nodes, and $E$ denotes edges. For each variable $v_i$, the log-likelihood ratio (LLR) is initialized as $
\mu_i = \log \frac{1-p}{p}$, where $p$ is the error probability of the physical qubits in $X$ and also the same probability in $Z$.
During iterations, messages are updated as:
\begin{align}
m^{(t+1)}_{c \to v} &= 2 \tanh^{-1} \!\!\prod_{v' \in N(c)\setminus v} \tanh\!\Big(\tfrac{1}{2} m^{(t)}_{v' \to c}\Big), \\
m^{(t+1)}_{v \to c} &= \mu_i + \sum_{c' \in N(v)\setminus c} m^{(t)}_{c' \to v}.
\end{align}
The posterior LLR is $
m^{(t)}_i = \mu_i + \sum_{c \in N(v_i)} m^{(t)}_{c \to v_i}$.
A hard decision is made as $\hat{e}_i = 0$ if $m^{(t)}_i \ge 0$, otherwise, $\hat{e}_i=1$. The algorithm halts when all parity checks satisfy the syndrome provided by the measurement (i.e., a valid error pattern is found) or when a maximum number of iterations is reached.


\subsection{Generalized Bicycle (GB) Codes}
A code is called \emph{cyclic} if it is closed under cyclic shifts. For any cyclic code $\cC$, we can associate a one-to-one mapping between $\mathbb{F}^n$ and $\cR_n \triangleq \mathbb{F}[x] / x^n-1$ by mapping $c = (c_{0}, c_1, \ldots, c_{n-1}) \in \mathbb{F}^n$ to $c(x) \triangleq c_0 + c_1x + ... + c_{n-1}x^{n-1}$ and, consequently, a cyclic shift corresponds to $xc(x) = c_{n-1} + c_0x + ... + c_{n-2}x^{n-1}$. 
Hence, every cyclic code forms an ideal in $\mathcal{R}$ generated by a monic polynomial 
$g(x)$ where $g(x)\,|\,x^n - 1$, with check polynomial $h(x) = (x^n - 1)/g(x)$. 
Let $P$ be the $n \times n$ cyclic permutation matrix. Then, both generator and parity-check matrices can be expressed as circulant matrices where a circulant matrix $A$ corresponding to the polynomial $a(x) = a_0 + a_1x+ ... + a_{n-1}x^{n-1}$ is defined as $A = a(P)$.

Since circulant matrices commute, they are very useful for CSS code constructions, e.g., they are used in \cite{PhysRevA.88.012311} to construct bicycle codes. 
The \emph{generalized bicycle} (GB) construction \cite{PhysRevA.88.012311} defines a CSS quantum LDPC code using two $n \times n$ matrices $A$ and $B$, typically circulant or quasi-circulant:
\begin{equation}
H_X = [A,\,B], \qquad H_Z = [B^{\mathsf{T}},\,A^{\mathsf{T}}],
\label{eq:gb}
\end{equation}
with $H_X H_Z^{\mathsf{T}} = AB + BA = 0$, ensuring CSS orthogonality. When $B=A^{\mathsf{T}}$, this reduces to the \emph{bicycle} codes \cite{MacKay_2004}.

Consider two circulant matrices $A$ and $B$ corresponding to the polynomials $a(x),b(x)\in\mathbb{F}_2[x]$ respectively, with degree $<n$.  
Then the code $\mathrm{GB}(a,b)$ with dimension $k$ corresponding to CSS $[[2n, k ]]$ code is given by the following proposition.

\begin{proposition}
\label{prop1}
The dimension $k$ of the generalized bicycle code $[[2n, k]]$ defined by 
$a(x), b(x) \in \mathbb{F}_2[x]$ is given by:
\begin{equation}
    k = 2 \deg h(x)
\end{equation}
where $h(x) \triangleq \gcd(a(x), b(x), x^n - 1)$.
\end{proposition}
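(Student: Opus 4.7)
The plan is to apply the standard CSS dimension formula $k = N - \operatorname{rank}(H_X) - \operatorname{rank}(H_Z)$ with $N = 2n$ physical qubits, and to show that each of the two ranks equals $n - \deg h(x)$; the claim then follows immediately as $k = 2n - 2(n - \deg h(x)) = 2\deg h(x)$.

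The key tool is to pass from matrices to the circulant ring $R \triangleq \mathbb{F}_2[x]/(x^n-1)$, identifying $\mathbb{F}_2^n$ with $R$ so that each circulant $a(P)$ becomes multiplication by $a(x)$. Under this identification, $H_X = [A,B]$ sends $(u,v) \in R \oplus R$ to $a(x)u + b(x)v$, so its column space is the ideal sum $\langle a(x) \rangle + \langle b(x) \rangle$, which by elementary divisor theory in $R$ equals the principal ideal $\langle \gcd(a(x), b(x), x^n-1) \rangle = \langle h(x) \rangle$. Its $\mathbb{F}_2$-dimension is $n - \deg h(x)$, giving $\operatorname{rank}(H_X) = n - \deg h(x)$.

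For $H_Z = [B^{\mathsf{T}}, A^{\mathsf{T}}]$ I would avoid reasoning about the transposed blocks directly and instead use $\operatorname{rank}(H_Z) = n - \dim \ker(H_Z^{\mathsf{T}})$. Viewed in $R$, this kernel equals the intersection of annihilators $\operatorname{Ann}(a(x)) \cap \operatorname{Ann}(b(x))$. Using the standard fact $\operatorname{Ann}(f) = \langle (x^n-1)/\gcd(f, x^n-1) \rangle$ in $R$, together with the identity that the lcm of $(x^n-1)/g_a$ and $(x^n-1)/g_b$ equals $(x^n-1)/\gcd(g_a, g_b) = (x^n-1)/h(x)$, this intersection is $\langle (x^n-1)/h(x) \rangle$, of dimension $\deg h(x)$. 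Therefore $\operatorname{rank}(H_Z) = n - \deg h(x)$ as well.

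The step I expect to be most delicate is the $H_Z$ computation, specifically handling the transposed blocks cleanly. Working with $\ker(H_Z^{\mathsf{T}})$ rather than the column span of $H_Z$ sidesteps any explicit polynomial reversal induced by $(\cdot)^{\mathsf{T}}$; once the annihilator intersection is reduced to the lcm/gcd correspondence above, the remainder is bookkeeping. An alternative route would be to invoke the involution $x \mapsto x^{n-1}$ on $R$, which is a ring automorphism and therefore preserves ideal dimensions, but the kernel-based computation seems cleaner.
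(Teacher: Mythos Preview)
Your argument is correct. The CSS dimension formula, the identification of the column space of $H_X$ with the ideal $\langle h(x)\rangle$ in $R=\mathbb{F}_2[x]/(x^n-1)$, and the kernel computation for $H_Z^{\mathsf T}$ via $\mathrm{Ann}(a)\cap\mathrm{Ann}(b)=\langle (x^n-1)/h\rangle$ all go through; the lcm identity $\mathrm{lcm}\!\big((x^n-1)/g_a,(x^n-1)/g_b\big)=(x^n-1)/\gcd(g_a,g_b)$ holds in the UFD $\mathbb{F}_2[x]$ regardless of whether $x^n-1$ is squarefree, so no separability assumption is needed.

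As for comparison with the paper: there is nothing to compare. Proposition~\ref{prop1} is stated in the preliminaries without proof, as a known fact about GB codes (the surrounding discussion attributes the construction to prior work). Your write-up therefore supplies what the paper omits. The alternative you sketch at the end, using the ring automorphism $x\mapsto x^{-1}$ of $R$ to transport the $H_X$ computation directly to $H_Z$, is arguably the cleaner of the two routes and avoids the annihilator bookkeeping entirely, since $\gcd(a(x^{-1}),b(x^{-1}),x^n-1)$ has the same degree as $h(x)$; either version is fine.
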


\section{Multiple-Bases Belief-Propagation List-decoding (MBBP-LD)}
\label{sec:RPC-LD}
In this section, we recall the Multiple-Bases Belief-Propagation (MBBP) framework \cite{Hehn_2010,4557244} and extend it by introducing a structured method to generate redundant parity-check matrices for QLDPC codes. We also propose an improved decision rule that enhances decoding performance compared to the conventional least-metric selector (LMS).

\subsection{MBBP Decoding via Tree-Based Construction}
\label{sec:RPC-LD-A}
MBBP decoding runs multiple BP decoders in parallel, each operating on a distinct parity-check matrix representation of the same code. Let the matrix used by the $\ell$-th decoder be denoted by $H^{(\ell)}$, $\ell \in \{1, \ldots, L\}$, and the error vector found after at most $i$ iterations by $\hat{e}_\ell$. Each instance of the decoder that has converged contributes its output to a candidate list $\mathcal{L} = \{\hat{e}_s \,|\, s \in \mathcal{S}\}$, where $\mathcal{S} \subseteq \{1, \ldots, L\}$ denotes the indices of successful decoders. In prior work \cite{Hehn_2010,4557244}, the elements of $\mathcal{L}$ are then passed to a least metric selector (LMS), which selects the most likely codeword according to the channel distribution. In this work, we further introduce an alternative decision-making rule that will be described in the next subsection.

We obtain the parity-check matrices corresponding to the parallel decoders by extending the original matrix $H$ with redundant layers derived from a collection $\cT$ of subtrees in the Tanner graph that partition the check nodes. A subtree $t \in \cT$ is called \emph{maximal} if no additional check node, together with its adjacent variable nodes, can be included without forming a cycle.
The set $\mathcal{T}$ is determined in an ad-hoc fashion by exploring the check nodes and forming maximal sub-trees. The construction procedure is detailed in Algorithm \ref{alg:subtree}.
Each subtree $t$ defines a local submatrix $H_t$, and the corresponding representation of the code is given by
\begin{equation}
    H^{(t)} = 
    \begin{bmatrix}
        H \\[2pt]
        H_t
    \end{bmatrix}.
\end{equation}
Subsequently, BP decoding is run in parallel on all constructed matrices $\{H^{(t)} \,|\, t \in \mathcal{T}\}$. Each decoder produces an estimate $\hat{e}_t$ after a fixed number of iterations, and those that converge successfully add their outputs to the candidate list $\mathcal{L} = \{\hat{e}_t \,|\, t \in \mathcal{T}_{\mathrm{conv}}\}$, where $\mathcal{T}_{\mathrm{conv}} \subseteq \mathcal{T}$ denotes the set of successful decoders. The list $\mathcal{L}$ is then passed to a decision-making function $f_{\mathrm{DM}}$, which selects the final estimate $\hat{e}$ according to a certain selection rule, to be specified later in the next subsection. The redundant-row MBBP decoding scheme is illustrated in Fig.~\ref{fig:tree-mbbp} with pseudo-codes provided in  Algorithm~\ref{alg:mbbp-decoder}.
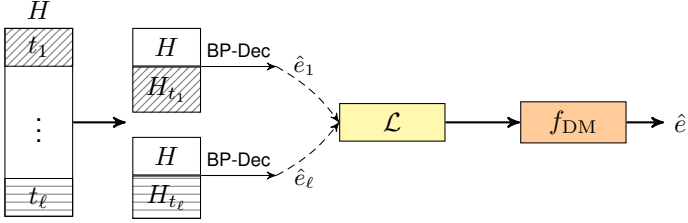
\begin{figure}[!t]
\centering
\begin{tikzpicture}[>=stealth', font=\sffamily]

\node[draw, minimum width=0.9cm, minimum height=2.5cm, label=above:$H$] (H) {};

\node[draw, pattern=north east lines, pattern color=gray,
      minimum width=0.9cm, minimum height=0.5cm,
      anchor=north] (S1) at (H.north) {$t_1$};

\node at (H.center) {$\vdots$};

\node[draw, pattern=horizontal lines, pattern color=gray,
      minimum width=0.9cm, minimum height=0.5cm,
      anchor=south] (Sl) at (H.south) {$t_\ell$};

\draw[->, thick] (H.east) -- ++(0.7,0);
\coordinate (Hmid) at (H.center);

\node[draw, minimum width=0.9cm, minimum height=0.5cm,
      right=1.7cm of Hmid, yshift=1cm, anchor=center] (H1) {$H$};
\node[draw, pattern=north east lines, pattern color=gray,
      minimum width=0.9cm, minimum height=0.6cm,
      below=0cm of H1.south, anchor=north] (HS1) {$H_{t_1}$};

\draw[->] (HS1.north east) -- ++(1.0,0) node[midway, above] {\scriptsize BP-Dec}
           coordinate (bp1);

\node[draw, minimum width=0.9cm, minimum height=0.5cm,
      right=1.7cm of Hmid, yshift=-0.45cm, anchor=center] (H2) {$H$};
\node[draw, pattern=horizontal lines, pattern color=gray,
      minimum width=0.9cm, minimum height=0.3cm,
      below=0cm of H2.south, anchor=north] (HSl) {$H_{t_\ell}$};

\draw[->] (HSl.north east) -- ++(1.0,0) node[midway, above] {\scriptsize BP-Dec}
           coordinate (bpl);

\node[draw, fill=yellow!40, minimum width=1.4cm, minimum height=0.5cm,
      right=4.7cm of Hmid, anchor=center] (L) {$\mathcal{L}$};
\node[draw, fill=orange!40, minimum width=1.4cm, minimum height=0.5cm,
      right=1.7cm of L, anchor=center] (DM) {$f_{\mathrm{DM}}$};

\draw[->, thick] (L.east) -- (DM.west);
\draw[->, thick] (DM.east) -- ++(0.5,0) node[right] {$\hat{e}$};

\draw[densely dashed, ->] (bp1) to[bend left=12] (L.west);
\draw[densely dashed, ->] (bpl) to[bend right=12] (L.west);

\node[font=\small] at ($(bp1)!0.45!(L.west)+(0,0.35)$) {$\hat{e}_1$};
\node[font=\small] at ($(bpl)!0.45!(L.west)-(0,0.35)$) {$\hat{e}_\ell$};

\end{tikzpicture}
\caption{MBBP-LD decoding with redundant-row construction.}
\label{fig:tree-mbbp}
\end{figure}
\begin{algorithm}[!t]
\caption{MBBP-LD Decoder via Decision-Maker $f_{\mathrm{DM}}$ for QLDPC Codes}
\label{alg:mbbp-decoder}
\begin{algorithmic}[1]
\Require Parity-check matrix $H \in \{0,1\}^{m \times n}$; collection of maximal subtrees $\mathcal{T} = \{t_1, \ldots, t_{|\mathcal{T}|}\}$; syndrome vector $s \in \{0,1\}^m$; channel parameter $p$; decision rule $f_{\mathrm{DM}}$
\Ensure Estimated error vector $\hat{e}$

\State Initialize candidate list $\mathcal{L} \gets \emptyset$

\Statex \textbf{Decoding Phase:}
\For{each $t \in \mathcal{T}$}
    \State $H^{(t)} \gets [\,H; H_t\,]$
    \State $(\hat{e}_t, \textit{converged}) \gets \textsc{BP-Decode}(H^{(t)}, s, p)$
    \If{\textit{converged}}
        \State $\mathcal{L} \gets \mathcal{L} \cup \{\hat{e}_t\}$
    \EndIf
\EndFor

\Statex \textbf{Decision Phase:}
\State $\hat{e} \gets f_{\mathrm{DM}}(\mathcal{L})$

\Return $\hat{e}$
\end{algorithmic}
\end{algorithm}
\begin{algorithm}[!t]
\caption{Maximal Subtrees Construction}
\label{alg:subtree}
\begin{algorithmic}[1]
\Require Parity-check matrix $H \in \{0,1\}^{m \times n}$; permutation $\pi$ of check nodes $V_c$
\Ensure Collection of maximal subtrees $\mathcal{T} = \{t_1, \ldots, t_{|\mathcal{T}|}\}$
\State Construct the Tanner graph $G = (V_c \cup V_v, E)$.
\State Mark all $c \in V_c$ as unvisited; set $\mathcal{T} \gets \emptyset$.
\For{each $c \in V_c$ in order $\pi$}
    \If{$c$ unvisited}
        \State Initialize subtree $t \gets \{c\}$, mark visited and queue $Q \gets \{c\}$.
        \While{$Q \neq \emptyset$}
            \State Remove $u$ from $Q$.
            \For{each $c' \in V_c$ adjacent to $u$ via one variable node}
                \If{$c'$ unvisited and $t \cup \{c'\}$ is cycle-free}
                    \State Add $t \gets \{c'\}$, mark visited, and queue $Q \gets \{c'\}$.
                \EndIf
            \EndFor
        \EndWhile
        \State Append $\mathcal{T} \gets \{t\}$.
    \EndIf
\EndFor
\State \Return $\mathcal{T}$
\end{algorithmic}
\end{algorithm}
In essence, the procedure performs a breadth-first traversal of the Tanner graph, where at each iteration all variable-node neighbors of the current check node are included to expand the subtree. Different permutations $\pi$ of the check nodes lead to different traversal orders, producing different collections $\mathcal{T}$ and consequently varied redundant matrices $H^{(t)}$, which potentially enhances decoding performance for different codes.
\begin{lemma}[Subtree Size Bound]
\label{lem:subtree-size}
For a Tanner graph $\mathcal{G} = (V_c \cup V_v, E)$ with check-regular degree $w$, 
the number of check nodes in any subtree $t \in \mathcal{T}$ generated by Algorithm  \ref{alg:subtree} satisfies
\begin{equation}
    |t| \le \frac{|V_v| - 1}{w - 1}.
\end{equation}
In particular, for $w = 6$ generalized bicycle (GB) codes where $|V_v| = 2|V_c|$, this bound reduces to
\begin{equation}
    |t| \le \frac{2|V_c| - 1}{5} \le 0.4\,|V_c|.
\end{equation}
\end{lemma}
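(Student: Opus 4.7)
The plan is to view the subtree $t \in \cT$ as the bipartite subgraph of $\cG$ whose vertex set is $t$ together with every variable node adjacent (in $\cG$) to some check in $t$, and whose edges are the corresponding Tanner-graph edges. Because Algorithm \ref{alg:subtree} grows $t$ by a BFS that only admits a new check $c'$ when $t \cup \{c'\}$ remains cycle-free, and only along check-to-check adjacencies in $\cG$, this bipartite subgraph is connected and acyclic, i.e., a tree.

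Once the tree structure is in place, I would apply a two-way edge count. Let $v_t$ denote the number of variable nodes appearing in the subtree. On one hand, since every check is $w$-regular in $\cG$ and all of its incident edges lie in the subtree by construction, the edge count equals $w|t|$. On the other hand, the tree identity gives (edges) $=$ (vertices) $-1$, i.e., $w|t| = |t| + v_t - 1$, so $v_t = (w-1)|t| + 1$. Combining with the trivial bound $v_t \le |V_v|$ yields
\begin{equation}
|t| \le \frac{|V_v|-1}{w-1},
\end{equation}
which is the first claim. The second claim follows immediately by substituting $w=6$ and $|V_v| = 2|V_c|$, and noting that $(2|V_c|-1)/5 < 0.4\,|V_c|$.

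The main subtlety I expect is the formal justification of the tree property from the algorithm's cycle-free guard. Concretely, one must verify inductively that every newly admitted check $c'$ shares exactly one variable-node neighbor with the current subtree: if it shared two or more, those two variables would already be connected through the existing tree and adding $c'$ would close a cycle, which the guard forbids; sharing zero is excluded because $c'$ is added only as a neighbor of some $u \in t$ through a variable node. Hence each admitted $c'$ contributes one edge back to an existing variable and $w-1$ edges to fresh variables, so both $|t|$ and $v_t - 1$ grow in lockstep with the edge count by exactly $1$ and $w-1$ respectively, preserving (edges) $=$ (vertices) $-1$. Once this induction is set up cleanly, the rest of the argument is routine edge-counting.
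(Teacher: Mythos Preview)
Your proposal is correct and follows essentially the same approach as the paper: both arguments observe that each admitted check shares exactly one variable node with the existing subtree (by the cycle-free guard) and hence contributes $w-1$ fresh variable nodes, giving $v_t = (w-1)|t|+1 \le |V_v|$. You package this via the tree identity $|E| = |V|-1$ together with the edge count $|E| = w|t|$, whereas the paper does the equivalent incremental count directly; the two are the same argument in different clothing.
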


\begin{proof}
Each check node in a tree of degree $w$ is connected to $w$ distinct variable nodes. 
When a new check is added to the subtree, it must share exactly one variable node 
with the existing check nodes to maintain acyclicity; hence, each additional check 
introduces exactly $(w-1)$ new variable nodes. 

Consequently, a subtree containing $|t|$ check nodes covers $1 + (w-1)|t|$ variable nodes considering the check node root as well. 
Since the Tanner graph contains at most $|V_v|$ variable nodes, it follows that
$1 + (w-1)|t| \le |V_v|$, which simplifies to the desired bound 
$|t| \le (|V_v| - 1)/(w - 1)$. 
\end{proof}

 Lemma \ref{lem:subtree-size} provides a bound on the complexity for cases where the decoder also depends on the number of parity checks. We discuss this specifically for BP with serial scheduling in Section \ref{sec:num-res}.
\begin{remark}
Since the parity-check matrices are sparse and QLDPC codes typically have moderate block lengths, the partitioning step incurs negligible computational cost in practice. 
The procedure consists of successive breadth-first searches over the Tanner graph, with total complexity $\mathcal{O}(|E|)$, linear in the number of edges. 
This cost is insignificant compared to the iterative BP decoding process and therefore does not affect the overall complexity of the proposed decoder. 
\end{remark} 

\begin{remark} 
This framework preserves maximum-likelihood (ML) decoding, as each subtree $t \in \cT$ contains all variable nodes adjacent to its check nodes, ensuring that no additional constraints are introduced. Redundant parity checks introduce additional dual codewords (stabilizer combinations) that enhance belief-propagation convergence. By running redundant decoders in parallel, the scheme retains the linear-time complexity and latency of standard BP. Moreover, a threshold $\tau$ can be defined such that the decoding process terminates once the fraction of converged processes reaches $\tau$.
\end{remark}

Intuitively, the subtree-based construction enhances decoding by mitigating trapping sets \cite{Raveendran2021trappingsetsof, 1003839} and helping with well-distributed layering. 
Since BP decoding is exact on tree-like graphs \cite{gallager1962low}, it remains exact on the submatrix $H_t$. The induced subtrees also yield balanced and connected partitions in the Tanner graph, especially in graphs with short cycles. This follows from the girth properties and similar arguments in \cite{Rabeti2025BoundsAN}. Such structured and evenly distributed layerings improve convergence compared to random selection, enabling the redundant matrices $H^{(t)}$ to enhance decoding performance.

\subsection{Decision Making (DM) Rule}
After constructing the candidate list $\cL$, as discussed in Section\,\ref{sec:RPC-LD-A}, the output error vector is given by $\hat{e} = f_{\mathrm{DM}}(\cL)$. We propose the following rule:


\textbf{Frequency-Weighted Scoring (FWS).}  
    Each candidate is assigned a score that reflects its frequency in the list and its Hamming weight:
    \[
        f_{\mathrm{DM}}^{\text{FWS}}(\cL) 
        = \arg\max_{e \in \cL} \frac{|\{e' \in \cL : e'=e\}|}{w_H(e)+1}.
    \]
   The numerator rewards candidates that are repeatedly produced by different BP instances, while the denominator penalizes higher weight errors.
    The addition of one in the denominator avoids division by zero when $w_H(e)=0$.

If no decoder converges, an extra BP stage with a higher iteration limit can be applied, or non-converged outputs may be considered as candidates. 
In the reported simulations, such cases are treated as decoding failures, and the all-zero error vector is returned.

\section{Univariate Bicycle (UB) Code}
\label{sec:construct}
In this section, we recall the Frobenius identity and use it to propose a new construction called \textit{Univariate Bicycle (UB)} codes. These codes are derived from Generalized Bicycle (GB) codes \cite{PhysRevA.88.012311} with row weight limited to $w$. This method reduces the polynomial search space from $\cO(n^{w})$ to $\cO(n^{w/2})$, i.e., instead of searching for two polynomial components $a(x)$ and $b(x)$, it only searches for $a(x)$ and sets $b(x)$ to a carefully chosen power of $a(x)$. This allows us to obtain codes with parameters close to those in \cite{Bravyi_2024, wang2024coprime, postema2025existencecharacterisationbivariatebicycle}.
The Frobenius Identity is as follows:
\begin{proposition}[Frobenius Identity]
\label{lem:Frobenius}
Let $p$ be a prime number, and let $\mathbb{F}$ be a field of characteristic $p$.  
Then, for any elements $x_1, x_2, \ldots, x_n \in \mathbb{F}$,
\begin{equation}
    (x_1 + x_2 + \cdots + x_n)^t 
    = x_1^t + x_2^t + \cdots + x_n^t.
\end{equation}
where $t = p^k$, for some integer $k \ge 1$.
\end{proposition}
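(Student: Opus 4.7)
The plan is to prove the identity first for $t=p$ and $n=2$ via the binomial theorem, then bootstrap to arbitrary $n$ by induction, and finally to arbitrary $t=p^k$ by iterating the Frobenius map.

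First I would establish the base case $(x_1+x_2)^p = x_1^p + x_2^p$. Expanding by the binomial theorem gives
\begin{equation}
(x_1+x_2)^p \;=\; \sum_{i=0}^{p}\binom{p}{i}\,x_1^{\,i}\,x_2^{\,p-i}.
\end{equation}
The key number-theoretic observation is that for $1 \le i \le p-1$, the integer $\binom{p}{i} = p!/(i!\,(p-i)!)$ has $p$ in its numerator, and since $p$ is prime and both $i!$ and $(p-i)!$ are products of integers strictly smaller than $p$, the factor $p$ is not cancelled. Hence $p \mid \binom{p}{i}$ for $1 \le i \le p-1$, so these coefficients vanish in $\mathbb{F}$ by hypothesis on the characteristic. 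Only the $i=0$ and $i=p$ terms survive, yielding the desired equality.

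Next I would extend to $n$ variables by induction on $n$. The case $n=2$ is above, and assuming the result for $n-1$ we write
\begin{equation}
(x_1+\cdots+x_n)^p \;=\; \bigl((x_1+\cdots+x_{n-1})+x_n\bigr)^p \;=\; (x_1+\cdots+x_{n-1})^p + x_n^{\,p},
\end{equation}
using the two-variable case applied to the elements $x_1+\cdots+x_{n-1}$ and $x_n$ of $\mathbb{F}$. The inductive hypothesis finishes this step. Finally, to reach arbitrary $t=p^k$, I would proceed by induction on $k$. Writing $p^{k} = p\cdot p^{k-1}$, we have
\begin{equation}
(x_1+\cdots+x_n)^{p^k} \;=\; \bigl((x_1+\cdots+x_n)^{p^{k-1}}\bigr)^{p} \;=\; \bigl(x_1^{\,p^{k-1}}+\cdots+x_n^{\,p^{k-1}}\bigr)^{p},
\end{equation}
by the inductive hypothesis, and then applying the $t=p$ case once more (to the elements $x_i^{\,p^{k-1}} \in \mathbb{F}$) gives $x_1^{\,p^k}+\cdots+x_n^{\,p^k}$, completing the induction.

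There is no real obstacle here: the only substantive content is the divisibility $p \mid \binom{p}{i}$ for $1\le i \le p-1$, and the rest is a pair of straightforward inductions. An alternative, essentially equivalent presentation would be to introduce the Frobenius endomorphism $\varphi:\mathbb{F}\to\mathbb{F}$, $\varphi(x)=x^p$, prove its additivity from the binomial argument, extend to any finite sum by induction, and note that $\varphi^{\,k}(x)=x^{p^k}$; since a composition of additive maps is additive, the claim follows. I would prefer the direct induction since it keeps the argument self-contained and avoids introducing notation that is not used elsewhere in the paper.
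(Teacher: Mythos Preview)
Your proof is correct and follows the standard textbook argument for the Frobenius identity. The paper itself does not supply a proof of this proposition: it simply states the identity as a known result and moves directly to using it in the code construction, so there is no alternative argument to compare against. What you have written is precisely the classical justification (binomial divisibility for the two-variable $p$-th power case, induction on $n$, then iteration to reach $p^k$), and it would serve perfectly well if a proof were required.
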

Our construction uses the Frobenius identity to preserve the weight-limited structure of the parity-check matrices $H_x$ and $H_z$, maintaining row weights of 6, 8. 
Let $A$ be a circulant matrix generated by $a(x)$. 
By Proposition \ref{lem:Frobenius}, for any $\ell>0$, the polynomial $b(x) \triangleq a^{t}(x)$ with $t = 2^{\ell}$, preserves the Hamming weight of $a(x)$. 
If $\gcd(a(x), x^n - 1) = g(x)$ with $\deg(g(x)) = k$, then by Proposition \ref{prop1} the corresponding code $\mathrm{GB}(a,b)$ has dimension $2k$ and parameters $[[2n,\,2k,\,d]]$, with stabilizer generators of weight $w=2\,w_H(a)$. 
While BB codes \cite{Bravyi_2024} and their coprime version \cite{postema2025existencecharacterisationbivariatebicycle, wang2024coprime} use bivariate polynomials for greater flexibility, our UB construction corresponds to the univariate case ($m=1$ in \cite{Bravyi_2024}), achieving comparable parameters with a much smaller search complexity. Selected codes can be seen in Table \ref{tab:poly-params}

\begin{algorithm}[!t]
\caption{Univariate Bicycle (UB) Code Search}
\label{alg:UB-search}
\begin{algorithmic}[1]
\Require Code length $n$, target row weight $w$, and power limit $\ell_{\max}$
\Ensure A list of candidate $(a(x), b(x))$ pairs generating UB codes
\State Initialize $\mathcal{L} \gets \emptyset$ \Comment{List of valid code pairs}
\For{each polynomial $a(x) \in \mathbb{F}_2[x]$ with $w_H(a)=w$ and $\deg(a)<n$}
        \For{$\ell = 1, 2, \ldots, \ell_{\max}$}
            \State $b(x) \gets a(x)^{2^{\ell}} \bmod (x^n + 1)$
                \State Construct circulant matrices $A = \mathrm{circ}(a(x))$, $B = \mathrm{circ}(b(x))$
                \State Form parity-check matrices $H_X = [A \, B]$, $H_Z = [B^{\!\top} \, A^{\!\top}]$
                \If{$\dim(\mathrm{CSS}(H_X, H_Z)) > 2$}
                    \State Append $(a(x), b(x))$ to $\mathcal{L}$
                \EndIf
        \EndFor
\EndFor
\State \Return $\mathcal{L}$
\end{algorithmic}
\end{algorithm}

\textbf{Complexity Analysis:}
The Algorithm \ref{alg:UB-search} determines the suitable polynomial $a(x)$, whose search space has an order of $\mathcal{O}(n^{w/2})$, 
while $b(x)$ is directly obtained from $a(x)$ using the exponent parameter $\ell$. 
In contrast, optimizing the general BB code construction requires an exhaustive search over both $a(x)$ and $b(x)$, 
resulting in a total complexity of $\mathcal{O}(n^{w})$.

This structure allowed faster searches for both weight-8 and weight-6 codes, leading to several codes with new parameters. The codes $[[126,12,\leq10]]$ and $[[126,14,\leq10]]$ have similar parameters to the coprime code $[[126,12,10]]$ \cite{wang2024coprime}, but show slight improvement in simulations. Table \ref{tab:poly-params} lists selected codes obtained by Algorithm \ref{alg:UB-search}. Exhaustive search was used to determine exact and lower-bound distances, while linear programming estimated upper bounds in infeasible cases.

\begin{table}[!t]
\centering

\label{tab:poly-params}
\renewcommand{\arraystretch}{1.0}
\setlength{\tabcolsep}{3.4pt}
\small
\begin{tabular}{|c|>{\centering\arraybackslash}m{1.2cm}|c|c|c|}
\hline
$a(x)$ & $\ell$ & $[[n,k,d]]$ & $R = \frac{k}{n}$ & $w$ \\
\hline
$1+x^{2}+x^{3}+x^{6}$ & 2 & $[[124,12,\ge10]]$ & 0.096 & 8 \\
\hline
$1+x+x^{4}+x^{7}$ & 3 & $[[124,14,\leq 10]]$ & 0.113 & 8 \\
\hline
$1+x^{2}+x^{3}+x^{9}$ & 3 & $[[126,14,\leq 10]]$ & 0.111 & 8 \\
\hline
$1+x^{2}+x^{5}+x^{6}$ & 4 & $[[126,12, 10]]$ & 0.095 & 8 \\
\hline
$1+x+x^{12}+x^{16}$ & 2 & $[[126,12,\le10]]$ & 0.095 & 8 \\
\hline
$1+x+x^{6}$ & 3 & $[[126,12, 8]]$ & 0.095 & 6 \\
\hline
$x^{4}+x^{3}+x+1$ & 3 & $[[132,8,8]]$ & 0.060 & 8 \\
\hline
$x^{8}+x^{7}+x+1$ & 2 & $[[140,16,8]]$ & 0.114 & 8 \\
\hline
$x^{7}+x^{4}+x^{3}+1$ & 2 & $[[144,14,8]]$ & 0.097 & 8 \\
\hline
$x^{10}+x^{9}+x^{2}+1$ & 4 & $[[146,20,8]]$ & 0.137 & 8 \\
\hline
$x^{10}+x^{8}+x+1$ & 4 & $[[146,20,8]]$ & 0.137 & 8 \\
\hline
$x^{8}+x^{7}+x^{5}+1$ & 2 & $[[168,16,\ge8]]$ & 0.095 & 8 \\
\hline
$x^{9}+x^{4}+x+1$ & 2 & $[[168,18,\ge8]]$ & 0.107 & 8 \\
\hline
$x^{12}+x^{10}+x^{9}+1$ & 2 & $[[178,24,\ge8]]$ & 0.135 & 8 \\
\hline
$x^{6}+x^{5}+x+1$ & 5 & $[[180,12,\ge8]]$ & 0.067 & 8 \\
\hline
$x^{7}+x^{3}+x+1$ & 2 & $[[180,14,\ge8]]$ & 0.078 & 8 \\
\hline
$x^{8}+x^{6}+1$ & 9 & $[[180,16,\ge8]]$ & 0.089 & 6 \\
\hline
$x^{19}+x^{6}+x+1$ & 2 & $[[312,14,\le13]]$ & 0.045 & 8 \\
\hline
$x^{12}+x^{4}+1$ & 2 & $[[560, 24,\le8]]$ & 0.044 & 8 \\
\hline
\end{tabular}
\caption{Selected univariate bicycle (UB) codes with $b(x) = a(x)^t$, $t = 2^\ell$.}
\label{tab:poly-params}
\vspace{-5mm}
\end{table}
\section{Numerical Results}
\label{sec:num-res}
\subsection{Proposed Decoder Performance Comparison}
In Fig.  \ref{fig:decoder_comparison}, simulation results are shown for two different BB codes [[144, 12, 12]] and [[288, 12, 18]] \cite{Bravyi_2024} over a binary symmetric channel with independent \(X\)-type errors. Sampling was terminated once 100 decoding failures were observed. We also compared MBBP-LD with other decoders, namely, Parallel BP, Serial BP, and BP-OSD. In Fig.  \ref{fig:bp_144}, all decoders have been set to $I_{max} = 600$, with BP using parallel normalized min-sum (with normalization factor $\beta = 0.875$), except the Serial BP decoder, which uses serial scheduling. The BP-OSD has order 0. 
In Fig.  \ref{fig:bp_288}, all decoders except BP are set to $I_{max} = 1000$, all using the serial min-sum decoder ($\beta = 0$), except the parallel BP, which uses parallel scheduling with $I_{max} = 50000$. The BP-OSD has order 0. In both cases, it can be seen that across the full range \(p \in [0.03,0.10]\), the proposed decoder consistently outperforms the other decoders while keeping the linear-time complexity.
The BP, BP-Serial, and BP-OSD decoders were implemented using the libraries in \cite{Roffe_2020, Roffe_LDPC_Python_tools_2022}. 
\begin{figure}[!t]
    \centering
    \begin{subfigure}[t]{0.48\textwidth}
        \centering
        \includegraphics[width=\linewidth]{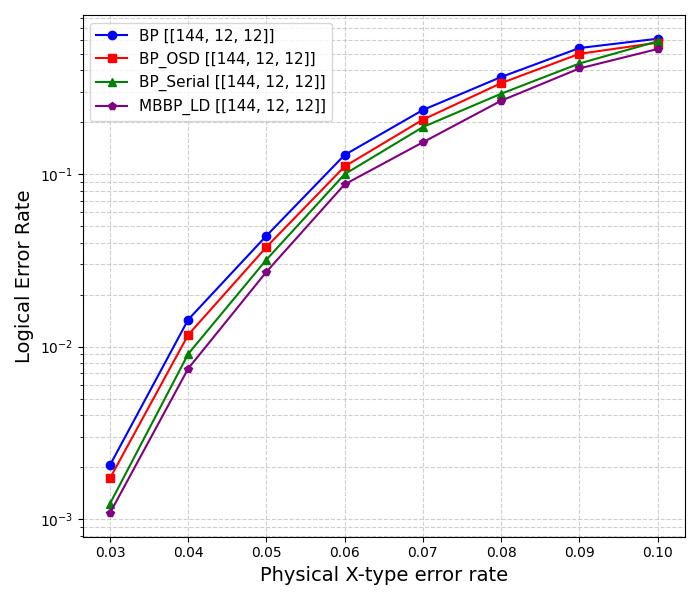}
        \caption{$\tau = 1, I_{\max}=600, \beta = 0.875, [[144,12,12]]$ \cite{Bravyi_2024}}
        \label{fig:bp_144}
    \end{subfigure}
    \begin{subfigure}[t]{0.48\textwidth}
        \centering
        \includegraphics[width=\linewidth]{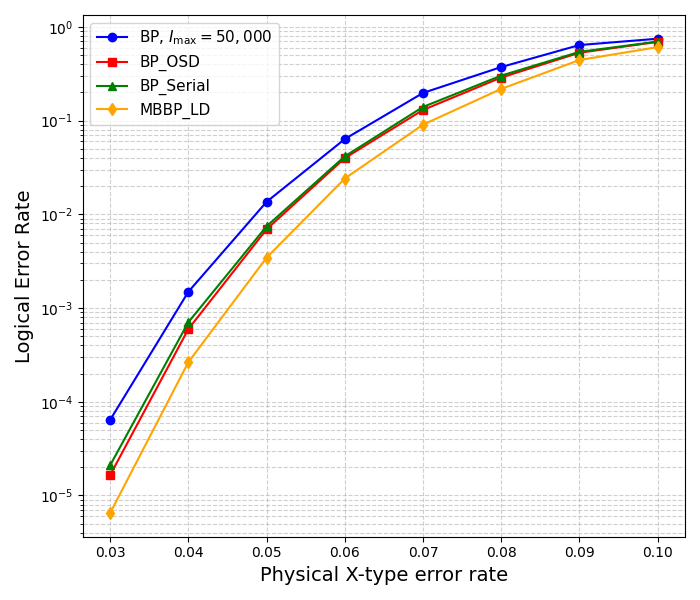}
        \caption{$\tau = 1, I_{\max}=1000, \beta = 0, [[288,12,18]]$ \cite{Bravyi_2024}}
        \label{fig:bp_288}
    \end{subfigure}
\caption{Performance of the proposed MBBP-LD decoder under two different regimes (Fig. \ref{fig:bp_144} and Fig. \ref{fig:bp_288} have parallel and serial scheduling, respectively)}
    \label{fig:decoder_comparison}
    \vspace{-5mm}
\end{figure}
\subsection{Complexity Analysis of MBBP-LD}
As MBBP-LD directly uses the BP decoder without post-processing, its complexity under parallel scheduling is $\mathcal{O}(n)$. By Lemma \ref{lem:subtree-size}, for a weight-$w$ code, each redundant parity-check matrix $H_t$ contains at most $\gamma = \frac{n-1}{w-1}$ check nodes. Hence, the decoding time satisfies  
\[
T_{\mathrm{MBBP-LD}} = \mathcal{O}((1+\gamma) \, T_{\mathrm{BP-Serial}}),
\]
which gives $T_{\mathrm{MBBP-LD}} = \mathcal{O}(1.4 \, T_{\mathrm{BP-Serial}})$ for $w = 6$.  
We compared the runtime of MBBP-LD with BP, BP-Serial, and BP-OSD decoders. For fairness, all decoders were tuned to achieve a similar logical error rate (LER). BP used parallel scheduling with $I_{\max} = 50000$, while other decoders used serial scheduling with $I_{\max} = 100$. All decoders used the min-sum algorithm with $\beta = 0$; BP-OSD had order $10$, and MBBP-LD used a stopping threshold $\tau = 0.4$.

It is worth noting that serial scheduling often improves convergence due to its sequential message updates, which can lead to faster stabilization of beliefs despite higher per-iteration complexity. Table \ref{tab:runtime_xonly} 
presents the simulation results for error rates $\{0.02, 0.06, 0.1\}$. The results show that MBBP-LD, under parallel scheduling, achieves nearly the same latency as linear-time decoders, even for short block lengths and low error-rate regimes. We also observed no significant performance difference when reducing the stopping threshold $\tau$, which can be tuned based on the code and operating error-rate regime. All simulations were performed on a 15-inch MacBook Air (Apple M3, 8 GB RAM, 2024) using Python 3.11, with parallelization implemented via \texttt{ThreadPoolExecutor}.
\begin{table}[t]
\centering
\renewcommand{\arraystretch}{1.1}
\resizebox{\linewidth}{!}{
\begin{tabular}{lcccc}
\hline
\textbf{Decoder} & \textbf{$p_x$} & \textbf{Avg. runtime (ms)} & \textbf{LER $P_L^X$} & \textbf{$T_{\mathrm{BP}}/T_{\mathrm{decoder}}$} \\
\hline
BP         & 0.02 & 0.01944 & $1.11\times10^{-4}$ & 1.00$\times$ \\
BP\_Serial & 0.02 & 0.01853 & $1.20\times10^{-4}$ & 1.05$\times$ \\
BP\_OSD10    & 0.02 & 0.02093 & $1.10\times10^{-4}$ & 0.93$\times$ \\
MBBP\_LD   & 0.02 & 0.02002 & $9.40\times10^{-5}$ & 0.97$\times$ \\
\hline
BP         & 0.06 & 9.1609  & 0.0970              & 1.00$\times$ \\
BP\_Serial & 0.06 & 0.09925 & 0.1074              & 92.30$\times$ \\
BP\_OSD10    & 0.06 & 0.12022 & 0.08682             & 76.20$\times$ \\
MBBP\_LD   & 0.06 & 0.08519 & 0.07279             & 107.53$\times$ \\
\hline
BP         & 0.10 & 69.09   & 0.6130              & 1.00$\times$ \\
BP\_Serial & 0.10 & 0.425   & 0.6365              & 162.6$\times$ \\
BP\_OSD10    & 0.10 & 0.544   & 0.5678              & 126.9$\times$ \\
MBBP\_LD   & 0.10 & 0.323   & 0.5443              & 213.7$\times$ \\
\hline
\end{tabular}}
\caption{Runtime of decoders for the \([[144,12,12]]\) code 
with stopping threshold $\tau = 0.4$ and $I_{\max} = 100$ 
(except BP with $I_{\max} = 50{,}000$).}
\label{tab:runtime_xonly}
\vspace{-4mm}
\end{table}
\subsection{Performance of the Proposed UB Codes}
Fig.  \ref{fig:ub_comparison} shows the logical error rate of two proposed UB codes, $[[126,12,\leq10]]$ and $[[126,14,\leq10]]$. Under the depolarizing error model, $X$, $Y$, and $Z$ errors occur independently with equal probability $q$. To compare with the $X$-type error model, with error probability $p$, the parameters satisfy $p = 2q/3$. These codes are compared with the coprime BB code $[[126,12,10]]$ \cite{wang2024coprime}. The BP-OSD and MBBP-LD decoders use the min-sum algorithm with serial scheduling $(\beta = 0)$. To illustrate the effect of the stopping threshold in MBBP-LD, we set $\tau = 0.4$. The OSD order is 7, and $I_{max} = 1000$. Both UB codes achieve slightly better performance than the coprime version, with further gains observed under the MBBP-LD decoder, even though the UB $[[126,14,\leq10]]$ code has a higher dimension.
\begin{figure}[!t]
\centering
\includegraphics[width=\linewidth]{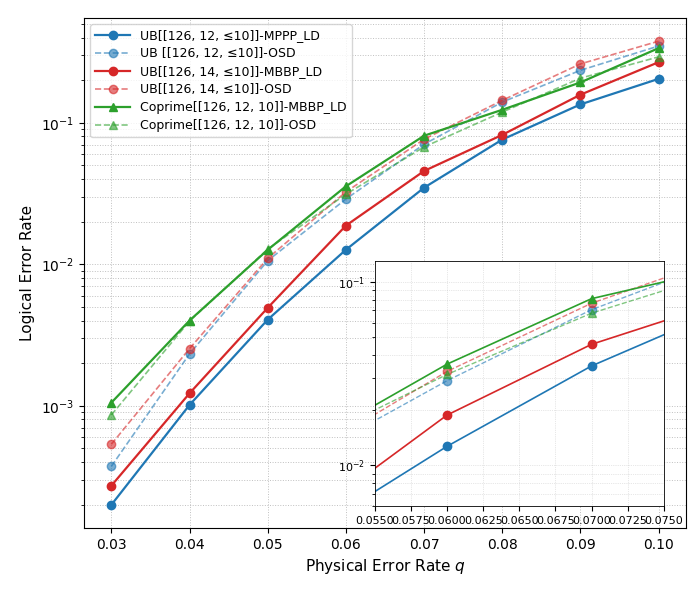}
\caption{Performance of the proposed UB codes with $(I_{max}=1000, \beta=0, \tau=0.4)$.}
\label{fig:ub_comparison}
\vspace{-5mm}
\end{figure}

\vspace{-2mm}
\section{Conclusion}
\label{sec:conclusion}
In this paper, we proposed MBBP-LD decoder for QLDPC codes, which can be applied to both cyclic and non-cyclic codes. Numerical results have shown consistent improvements compared to the state-of-the-art BP-OSD in a wide range of operating regimes.
We also proposed and studied UB codes that offer a reduced search space for the code design compared to BB codes. 
There are several directions for future research. For instance, we expect that making the construction of redundant checks specific to each code family, designing stronger decision-making rules, optimizing the stopping threshold, and adapting BP parameters such as maximum iterations and normalization factors can further enhance the performance. Moreover, the framework naturally benefits from trying different permutations and layerings, offering decoding diversity with little added latency when executed in parallel. For the proposed UB codes, it is natural to consider larger lengths and expand the choice of code parameters, where a brute-force search for optimal BB codes become increasingly difficult. Any advancement along these directions can further strengthen the applicability of QLDPC codes to quantum computing systems by reducing the decoding complexity/latency, improving the error rates, and offering new tools for more efficient search for good QLDPC codes with enhanced parameters.   

\bibliographystyle{IEEEtran}
{\footnotesize \bibliography{reffff}}

\end{document}